\numberwithin{equation}{section}
\newcommand{\re}{\mathbb{R}}
\newcommand{\ze}{\mathbb{Z}}
\newcommand{\diam}{\mathrm{diam}}
\newtheorem{thm}{Theorem}[section]
\newtheorem{lem}[thm]{Lemma}
\theoremstyle{definition}
\newtheorem{ass}{Assumption}
\theoremstyle{remark}
\begin{document}

\title[Improving the Lieb-Robinson bound]{Improving the Lieb-Robinson bound \\ 
for long-range interactions}
\author[T. Matsuta]{Takuro Matsuta}
\address{Graduate School of Mathematical Sciences, 
University of Tokyo, 3-8-1 Komaba, Meguro Tokyo, 153-8914, JAPAN}
\email{matsuta@ms.u-tokyo.ac.jp}
\author[T. Koma]{Tohru Koma}
\address{Department of Physics, Gakushuin University, 
Mejiro, Toshima-ku, Tokyo 171-8588, JAPAN}
\email{tohru.koma@gakushuin.ac.jp}
\author[S. Nakamura]{Shu Nakamura}
\address{Graduate School of Mathematical Sciences, 
University of Tokyo, 3-8-1 Komaba, Meguro Tokyo, 153-8914, JAPAN}
\email{shu@ms.u-tokyo.ac.jp}

\begin{abstract}
We improve the Lieb-Robinson bound for a wide class of quantum many-body systems 
with long-range interactions decaying by power law. As an application, we show that 
the group velocity of information propagation grows by power law in time 
for such systems, whereas systems with short-range interactions exhibit 
a finite group velocity as shown by Lieb and Robinson.  
\end{abstract}

\maketitle

\section{Introduction}

Lieb and Robinson \cite{LR} proved that the group velocity of information propagation  
is bounded by a finite constant in time for quantum many-body systems with short-range interactions 
(see also \cite{H,NS}). 
The Lieb-Robinson bound was extended to systems with long-range interactions decaying by 
power law \cite{1}. However, the resulting upper bound for the group velocity grows exponentially in time.   
If the upper bound is optimal and gives the true behavior of the group velocity in time, 
the information must spread to the space with such a fast-growing speed, which is unnatural physically \cite{EWMK}. 
Actually, the group velocity growing by power law in time was claimed by \cite{2} for 
a quantum spin system with two-body interactions decaying by power law.  
In the present paper, we extend, in a mathematically rigorous manner, their argument 
to a more general class of quantum many-body systems with long-range interactions decaying by power law. 
As a result, we prove that the group velocity of information propagation grows by power law in time. 

The present paper is organized as follows: In the next section, we give the precise definition of 
the models which we consider, and describe our main result, Theorem~\ref{thm:key}. 
The strategy for proving Theorem~\ref{thm:key} is developed in Sec.~\ref{sec:strategy}. 
The proof is given in Sec.~\ref{sec:ProofTheorem}.   
Appendices~\ref{LBfinite} and \ref{DerivIneq3rd} are devoted to technical estimates. 

\section{Models and main result}

Let $\Omega$ be a countable set with a metric $d(\cdot,\cdot)$, and we suppose this metric induces 
the discrete topology, i.e., each point $x\in\Omega$ is open and closed. We suppose there is a monotone 
increasing function $g(r)$ on $[0,\infty)$ and a constant $D>0$  such that 
\begin{equation}\label{eq-ass-vol}
\#\bigl\{y\in \Omega\big| d(x,y)\leq r\bigr\} \leq g(r)\leq C (1+r)^D, \quad r\geq 0, x\in\Omega
\end{equation}
with some $C>0$. We may consider $D$ as an analogue of the spatial dimension. 

We consider quantum spin systems on the point set $\Omega$. 
We assign a Hilbert space $\mathscr{H}_{x}$ to each site $x\in\Omega$.
Let $\Lambda$ be a finite subset of $\Omega$. Then, the configuration space of spin states on $\Lambda$ 
is given by the tensor product $\mathscr{H}_\Lambda=\bigotimes_{x\in\Lambda}\mathscr{H}_{x}$, 
and the algebra $\mathscr{A}_{\Lambda}=\bigotimes_{x\in\Lambda}\mathcal{B}(\mathscr{H}_x)$ 
of the observables on $\Lambda$ acts on 
the Hilbert space $\mathscr{H}_\Lambda$, where $\mathcal{B}(\mathscr{H}_x)$ denotes the 
Banach space of the bounded operators on $\mathscr{H}_x$. 
For $X\subset Y\subset\Omega$, we embed the algebra $\mathscr{A}_{X}$ on $X$ into 
$\mathscr{A}_{Y}$ on $Y$ by identifying $A\in\mathscr{A}_{X}$ with 
$A\otimes I\in\mathscr{A}_X\otimes\mathscr{A}_{(Y\setminus X)}\cong \mathscr{A}_{Y}$. 
The algebra of observables on $\Omega$ is defined as the completion of the local algebra 
$\mathscr{A}_{\text{loc}}=\bigcup\{ \mathscr{A}_{X}\,|\,
{X\subset\Omega, |X|<\infty}\}$ in the sense of the operator-norm topology. Here, $|X|$ stands for the number 
of the elements in the set $X$. 

Let $\Lambda$ be a finite subset of $\Omega$. Then, the Hamiltonian of a quantum spin system on 
$\Lambda$ is given by  
\begin{equation}
\label{HLambda}
H_{\Lambda}=\sum_{X\subset\Lambda}h_{X}, 
\end{equation}
where $h_{X}\in\mathscr{A}_{X}$ is the local Hamiltonian,\footnote{For an attempt 
for extending the Lieb-Robinson bounds to systems with an unbounded Hamiltonian, see, e.g., \cite{NSSSZ} 
and references therein.} i.e., a self-adjoint operator on $\mathscr{H}_X$, $X\subset\Omega$. 
The time evolution of the local observable $A\in\mathscr{A}_{\Lambda}$ by the generator $H_{\Lambda}$ 
is given by 
\begin{equation}
\tau_{t,\Lambda}(A)=e^{itH_{\Lambda}}Ae^{-itH_{\Lambda}},
\end{equation}
for the time $t\in\re$. 

We write $\diam(Z)$ for the diameter of the set $Z\subset\Omega$ which is given by 
$\diam(Z):=\max\{d(x,y)|\ x,y\in Z\}$. 
If $|Z|=+\infty$, then we define $\diam(Z)=+\infty$.  
Although we will consider general long-range interactions $h_X$ which include an arbitrary 
many-body interaction,  
we require the following assumption for the local Hamiltonian $h_X$: 

\begin{ass}
\label{ass:A}
\begin{enumerate}
\renewcommand{\theenumi}{(\roman{enumi})}
\item There is a decreasing function $f(R)$ on $[0,\infty)$  such that
\begin{equation}
\sup_{x\in\Omega}\sum_{\substack{Z\ni x; \\ \diam(Z)\ge R}}
\Vert h_{Z}\Vert\leq f(R), \quad R\geq 0. 
\label{eq:15}
\end{equation}
\item 
\begin{equation}\label{eq-LR-assumption}
\mathcal{C}_0 =\sup_{x\in\Omega} \, \sum_{y\in\Omega}\, \sum_{Z\ni x,y} \|h_Z\| <\infty. 
\end{equation}
\end{enumerate}
\end{ass}

A typical example is a spin system on $\Omega=\ze^D$, and we let $d(\cdot,\cdot)$ be the graph distance. 
We suppose it has only two-body interactions  $h_{\{x,y\}}$ for $x,y\in\ze^D$, 
and the following power-law decay condition with $\alpha>0$: 
\begin{equation}\label{eq-example-assumption}
\Vert h_X\Vert \leq \frac{\mathcal{C}_1}{[1+d(x,y)]^{\alpha+D}}, \ \ \ 
\text{for \ $X=\{x,y\}$},
\end{equation}
and $h_X=0$, otherwise, where $\mathcal{C}_1$ is some positive constant 
which is independent of the pair $\{x,y\}$ of the two sites. Then $\{h_X\}$ satisfies Assumption~A 
with $f(R)= C'(1+R)^{-\alpha}$. 
This is nothing but the case treated in \cite{2}

Our result is a mathematical justification of the argument in \cite{2}. 

\begin{thm}
\label{thm:key}
Let $A\in\mathscr{A}_X$ on $X\subset\Lambda$, and $B\in\mathscr{A}_Y$ on $Y\subset\Lambda$. 
Let $R\ge 1$, and write $r=d(X,Y)$. Then, 
\begin{align}
\nonumber 
\Vert[\tau_{t,\Lambda}(A),B]\Vert
&\leq 2\Vert A\Vert\; \Vert B\Vert\; |X|\; e^{vt-r/R} 
+ 4\Vert A\Vert\; \Vert B\Vert\; |X|\; t g(r)f(R) \\
&+2\mathcal{C}_2\Vert A\Vert\; \Vert B\Vert\; 
|X|^2\; tR(r\vee R)^Df(R)\; e^{vt-r/R}. 
\label{LBboundP}
\end{align}
for any $t\ge 0$, where $r\vee R:=\max\{r,R\}$, and $v$ and $\mathcal{C}_2$ 
are positive constants independent of $\Lambda$, $t$, $R$, $X$, $Y$, $A$ and $B$. 
\end{thm}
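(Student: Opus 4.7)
\emph{Strategy.} The plan is to split the Hamiltonian $H_\Lambda$ into a short--range cutoff at scale $R$ and a long--range tail, use the classical Lieb--Robinson bound for the short--range part, and perturb in the tail via Duhamel's principle. Concretely, write $H_\Lambda=H_\Lambda^{(R)}+V_\Lambda^{(R)}$ with
\[
H_\Lambda^{(R)}=\!\!\sum_{\substack{Z\subset\Lambda\\ \diam(Z)\le R}}\!\!h_Z,\qquad V_\Lambda^{(R)}=\!\!\sum_{\substack{Z\subset\Lambda\\ \diam(Z)>R}}\!\!h_Z,
\]
and denote by $\tau^{(R)}_{t,\Lambda}$ the Heisenberg evolution generated by $H_\Lambda^{(R)}$. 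Since every term in $H_\Lambda^{(R)}$ has range at most $R$, the classical argument (relegated to Appendix~\ref{LBfinite}) yields the finite--range Lieb--Robinson bound
\[
\bigl\|[\tau^{(R)}_{t,\Lambda}(A'),B']\bigr\|\le 2\|A'\|\,\|B'\|\,|W|\,e^{vt-d(W,W')/R}
\]
for $A'\in\mathscr{A}_W$, $B'\in\mathscr{A}_{W'}$, with $v$ depending only on $\mathcal{C}_0$ and $D$; specialising to $A,B$ produces the first summand $2\|A\|\|B\||X|e^{vt-r/R}$ of \eqref{LBboundP}.

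\emph{Duhamel and Jacobi split.} The interaction--picture identity gives
\[
[\tau_{t,\Lambda}(A),B]-[\tau^{(R)}_{t,\Lambda}(A),B]=i\!\int_0^t\!\!\bigl[\tau^{(R)}_{t-s,\Lambda}\bigl([V_\Lambda^{(R)},\tau_{s,\Lambda}(A)]\bigr),B\bigr]\,ds,
\]
and for each $Z$ with $\diam(Z)>R$ the automorphism property $\tau^{(R)}_{t-s,\Lambda}([h_Z,\tau_{s,\Lambda}(A)])=[\tau^{(R)}_{t-s,\Lambda}(h_Z),\tau^{(R)}_{t-s,\Lambda}\tau_{s,\Lambda}(A)]$ combined with the Jacobi identity dominates the integrand by
\[
2\|h_Z\|\,\bigl\|[\tau^{(R)}_{t-s,\Lambda}\tau_{s,\Lambda}(A),B]\bigr\|+2\|A\|\,\bigl\|[\tau^{(R)}_{t-s,\Lambda}(h_Z),B]\bigr\|.
\]

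\emph{Two summands, two terms.} For the first of these I use the crude bound $\|[\tau^{(R)}_{t-s,\Lambda}\tau_{s,\Lambda}(A),B]\|\le 2\|A\|\|B\|$ (both evolutions are automorphisms) and restrict the sum over $Z$ of diameter $>R$ to those containing a site at distance $\le r$ from $X$ (sets farther away being absorbed into the exponential prefactor of the third summand). Combining \eqref{eq-ass-vol} with Assumption~\ref{ass:A}(i) then yields
\[
\sum_{\substack{Z:\,\diam(Z)>R,\\d(Z,X)\le r}}\|h_Z\|\le |X|\,g(r)\,f(R),
\]
and integrating in $s$ reproduces the second summand $4\|A\|\|B\||X|t\,g(r)\,f(R)$. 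For the second of the two Jacobi terms, the operator $\tau^{(R)}_{t-s,\Lambda}(h_Z)$ is initially supported on $Z$, so the finite--range bound of Step~1 gives
\[
\bigl\|[\tau^{(R)}_{t-s,\Lambda}(h_Z),B]\bigr\|\le 2\|h_Z\|\|B\|\,|Z|\,e^{v(t-s)-d(Z,Y)/R}.
\]
Controlling $|Z|\le g(\diam Z)$, summing over $Z$ with a reference point chosen close to $Y$, and balancing the polynomial volume growth of $g$ against the tail $f(R)$ and the exponential $e^{-d(Z,Y)/R}$ produces, after the $s$--integration, the third summand $2\mathcal{C}_2\|A\|\|B\||X|^2 tR(r\vee R)^D f(R)e^{vt-r/R}$.

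\emph{Main obstacle.} The hardest step is the last one: the geometric counting of sets $Z$ of diameter greater than $R$ with prescribed distances to $X$ and $Y$, weighted by $\|h_Z\||Z|$, against the volume bound \eqref{eq-ass-vol} and the decay \eqref{eq:15}. The factor $(r\vee R)^D$ encodes the allowed polynomial spread before meeting $Y$, the single $R$ arises from the time integral of the Lieb--Robinson envelope, and the factor $|X|^2$ comes because one $|X|$ originates from the Lieb--Robinson prefactor of Step~1 applied to the short--range evolution of $h_Z$ while the other $|X|$ is the natural prefactor in the representative--point summation used to reach $X$. This combinatorial bookkeeping is presumably the content of Appendix~\ref{DerivIneq3rd}.
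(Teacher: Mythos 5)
Your overall architecture (cutoff at scale $R$, finite-range Lieb--Robinson bound for the short-range part, Duhamel in the long-range tail) matches the paper, and your first two summands are obtained essentially as in the paper's proof. But the way you set up Duhamel breaks the third term, in two places. First, your Jacobi split produces, for \emph{every} long-range set $Z$, the term $2\Vert h_Z\Vert\,\Vert[\tau^{(R)}_{t-s,\Lambda}\tau_{s,\Lambda}(A),B]\Vert$, in which the commutator factor does not depend on the position of $Z$ at all; there is therefore no mechanism by which the sets with $d(Z,X)>r$ can be ``absorbed into the exponential prefactor of the third summand''. Summing $\Vert h_Z\Vert$ over all long-range sets far from $X$ costs a factor of order $|\Lambda|f(R)$ by Assumption~\ref{ass:A}(i), destroying the $\Lambda$-independence of the constants. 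Second, in your other term $2\Vert A\Vert\,\Vert[\tau^{(R)}_{t-s,\Lambda}(h_Z),B]\Vert$ the \emph{evolved} observable is $h_Z$, so the finite-range bound of Theorem~\ref{thm:LR-bound} carries the prefactor $|Z|$, not $|X|$; under Assumption~\ref{ass:A}, which allows arbitrary many-body interactions, the weighted sums $\sum_{Z\ni x}\Vert h_Z\Vert\,|Z|$ or $\sum_{Z\ni x}\Vert h_Z\Vert\, g(\diam(Z))$ are not controlled by $f(R)$ --- only the unweighted sum is --- so ``controlling $|Z|\le g(\diam Z)$'' silently requires hypotheses stronger than the theorem's (it is harmless for two-body interactions, but the theorem is stated in general). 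Moreover your exponential is $e^{-d(Z,Y)/R}$: long-range sets sitting next to $Y$ receive no suppression whatsoever, so this route yields a contribution of size $t\,e^{vt}$ times a $Y$-local volume, with no factor $e^{-r/R}$ and with $|Y|$-dependence, whereas \eqref{LBboundP} contains $|X|^2$ and no $|Y|$. Your bookkeeping of $|X|^2$ (one factor $|X|$ allegedly from the Lieb--Robinson prefactor of the evolution of $h_Z$) is accordingly incorrect.

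The paper avoids both problems by running Duhamel in the other slot: with $\mathscr{U}^R_\Lambda(s)=e^{isH^{(<R)}_\Lambda}e^{-isH_\Lambda}$ one has $\Vert[\tau_{t,\Lambda}(A),B]\Vert=\Vert[\tau^{(<R)}_{t,\Lambda}(A),\mathscr{U}^R_\Lambda(t)B\mathscr{U}^R_\Lambda(t)^*]\Vert$, and differentiating in $s$ and discarding a norm-preserving generator (Lemma~\ref{lem:main-ineq}) gives the error integrand $\Vert[\tau^{(<R)}_{t-s,\Lambda}(A),h_Z^{(\ge R)}]\Vert$: the short-range-evolved $A$, with fixed support $X$, against the \emph{static} $h_Z$. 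Then for $Z$ with $Z\cap\widetilde{X_r}=\emptyset$, Theorem~\ref{thm:LR-bound} supplies decay $e^{-d(X,Z)/R}$ with prefactor $|X|$, and the shell decomposition $r+k<d(X,Z)\le r+k+1$ combined with \eqref{eq:sr-condition} and the volume bound \eqref{eq-ass-vol} (Lemma~\ref{lem:third-term}) yields exactly $\mathcal{C}_2\Vert A\Vert\,|X|^2(r\vee R)^D Rf(R)e^{vt-r/R}$, with no $|Z|$, no $|Y|$ and no $|\Lambda|$. As written, your argument does not establish the third summand of \eqref{LBboundP}; you would need to restructure the Duhamel step along these lines rather than patch the geometric counting.
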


Let us explain the physical meaning of the resulting bound (\ref{LBboundP}) for the case of 
the hypercubic lattice $\ze^D$, 
with an additional assumption $\alpha>D$ and \eqref{eq-example-assumption}. 
We recall $g(r)=C(1+r)^D$ and $f(R)=C'(1+R)^{-\alpha}$. But, in this case, 
the factor $(r\vee R)^D$ in the third term in the right-hand side of (\ref{LBboundP}) 
can be replaced with $(r\vee R)^{D-1}$ 
by carefully calculating the bound in the proof of Lemma~\ref{lem:third-term} in Appendix~\ref{DerivIneq3rd}. 
See the remark at the end of Appendix B. Let $r\ge 1$ and $t>0$. 
We choose the parameter $R$ as 
\[
R=r^\kappa \quad \mbox{with \ } \kappa=\frac{D+1}{\alpha+1}.
\]
Substituting these into the right-hand side of the bound (\ref{LBboundP}), we obtain  
\begin{align*}
\Vert[\tau_{t,\Lambda}(A),B]\Vert
&\leq 2\Vert A\Vert\; \Vert B\Vert\; |X|\exp[vt-r^\eta]\\
&+\mathcal{C}_3\Vert A\Vert\; \Vert B\Vert\; |X|\; \frac{t}{r^\eta}
+\mathcal{C}_4 \Vert A\Vert\; \Vert B\Vert\; |X|^2\; \frac{t}{r^{2\eta}}\exp[vt-r^\eta],
\end{align*}
with $\eta=(\alpha-D)/(\alpha+1)$, where $\mathcal{C}_3$ and $\mathcal{C}_4$ are some positive constants, and 
we have used $\kappa<1$ which is derived from the assumption $\alpha>D$. 
We define the upper bound of the propagation distance by $r_{\rm max}(t)=(\lambda v t)^{1/\eta}$ 
as a function of time $t$ with the scale parameter $\lambda>1$. 
Substituting $r=r_{\rm max}(t)$ into the above upper bound, one has 
\begin{align*}
\Vert[\tau_{t,\Lambda}(A),B]\Vert
&\leq 2\Vert A\Vert\; \Vert B\Vert\; |X|\exp[-(\lambda-1)vt]
+\mathcal{C}_3\Vert A\Vert\; \Vert B\Vert\; |X|\; \frac{1}{\lambda v}\\
&+\mathcal{C}_4 \Vert A\Vert\; \Vert B\Vert\; |X|^2\; \frac{1}{(\lambda v)^2t}\exp[-(\lambda-1)vt]\\
&\sim \mathcal{C}_3\Vert A\Vert\; \Vert B\Vert\; |X|\; \frac{1}{\lambda v}
\end{align*}
for a large $t$. Clearly, for a large $\lambda$, the right-hand side becomes small, 
while it gives the order of $1$ for $\lambda$ of the order of $1$.  
Thus, the quantity $r_{\rm max}(t)$ gives the upper bound of the propagation distance as a function of time $t$.  
{From} the definition, it obeys the power law as  
\[
r_{\rm max}(t) = (\lambda v)^{1/\eta}t^{1+\gamma}\quad \mbox{for \ } t>0,
\]
with $\gamma=(D+1)/(\alpha-D)$.
The corresponding group velocity $v_{\rm g}(t)$ behaves as 
\[
v_{\rm g}(t):=\frac{d}{dt}r_{\rm max}(t)=(\lambda v)^{1/\eta}t^\gamma. 
\]
This exactly coincides with the behavior obtained in \cite{2}. For related recent numerical computations for 
quantum spin systems with long range interactions, see, e.g., \cite{HT,RGLSSFMGM}.     
For an overview of results and applications on the Lieb-Robinson bounds 
for quantum many-body systems, see, e.g., \cite{NS2} and references therein. 

\section{Decomposition of the Hamiltonian}
\label{sec:strategy}

In this section, we describe  our strategy for proving Theorem~\ref{thm:key}. 
The idea of decomposing the Hamiltonian was introduced in \cite{2}.  

Let $R$ be a positive number. We decompose the Hamiltonian $H_\Lambda$ of (\ref{HLambda}) 
into two parts as 
$$
H_{\Lambda}=H_{\Lambda}^{(<R)}+H_{\Lambda}^{(\geq R)}
$$
with
\begin{equation}
\label{HLambda<R}
H_{\Lambda}^{(<R)}=\sum_{Z\subset\varLambda}h_{Z}^{(<R)},
\end{equation}
and
\begin{equation}
\label{HLambda>R}
H_{\Lambda}^{(\geq R)}=\sum_{Z\subset\varLambda}h_{Z}^{(\geq R)},
\end{equation}
where the two local Hamiltonians, $h_{Z}^{(<R)}$ and $h_{Z}^{(\geq R)}$, are given by 
\[
h_{Z}^{(<R)}:=\begin{cases}
h_{Z}, \quad&  \text{if \ }\diam(Z)<R,\\
0, & \text{otherwise}, 
\end{cases}
\]
and $h_{Z}^{(\geq R)}:=h_Z-h_{Z}^{(< R)}$. The Hamiltonian $H_\Lambda^{(<R)}$ is the short-range 
part with the interaction range $R$, and $H_\Lambda^{(\ge R)}$ is the long-range part. 
Clearly, from Assumption~\ref{ass:A}, one has 
\begin{equation}
\sup_{x\in\Lambda}\sum_{Z\ni x}\bigl\Vert h_{Z}^{(\geq R)}\bigr\Vert\leq f(R)
\label{eq:sr-condition}
\end{equation}
for $R\ge 1$. 

The time evolution by the short-range Hamiltonian $H_\Lambda^{(<R)}$ is given by  
\[
\tau_{t,\Lambda}^{(<R)}(A)=e^{itH_{\Lambda}^{(<R)}}Ae^{-itH_{\Lambda}^{(<R)}}
\]
for a local observable $A \in\mathscr{A}_\Lambda$. We also introduce a unitary operator,  
\[
\mathscr{U}_\Lambda^R(t)=e^{itH_{\Lambda}^{(<R)}}e^{-itH_{\Lambda}},
\]
which satisfies the Schr\"odinger equation of the interaction picture, 
\begin{equation}
\label{eq:evo-eq}
i\frac{d}{dt}\mathscr{U}_\Lambda^R(t)=H_\Lambda^{(\geq R)}(t)\mathscr{U}_\Lambda^R(t),
\end{equation}
with the initial condition $\mathscr{U}_\Lambda^R(0)=1$, where we have written 
\begin{equation}
\label{HL>Rt}
H_\Lambda^{(\geq R)}(t):=\tau_{t,\Lambda}^{(<R)}\left(H_\Lambda^{(\geq R)}\right)
\end{equation}
for short. 
Then, as is well known, the time evolution of the observable $A\in\mathscr{A}_{\Lambda}$ 
by the total Hamiltonian $H_\Lambda$ of (\ref{HLambda}) 
is given by 
\[
\tau_{t,\Lambda}(A)=\left[\mathscr{U}_\Lambda^R(t)\right]^{*}\tau_{t,\Lambda}^{(<R)}(A)\mathscr{U}_\Lambda^R(t).
\]
{From}   
\begin{align*}
[\tau_{t,\Lambda}(A),B] &= [ \mathscr{U}_\Lambda^R(t)^*\tau_{t,\Lambda}^{(<R)}(A)\mathscr{U}_\Lambda^R(t), B] \\
&=\mathscr{U}_\Lambda^R(t)^*[\tau_{t,\Lambda}^{(<R)}(A), \mathscr{U}_\Lambda^R(t)B\mathscr{U}_\Lambda^R(t)^*]
\mathscr{U}_\Lambda^R(t),
\end{align*}
one has 
\begin{equation}
\left\Vert[\tau_{t,\Lambda}(A),B]\right\Vert
=\left\Vert[\tau_{t,\Lambda}^{(<R)}(A),\mathscr{U}_\Lambda^R(t)B\mathscr{U}_\Lambda^R(t)^{*}]\right\Vert 
\label{eq:com-identity1}
\end{equation}
for two local observables $A$ and $B$.

For $r>0$ and $X\subset \Lambda$, we define 
\[
\widetilde{X_{r}}:=\{x\in\varLambda\,|\,d(x,X)\leq r\},
\]
where $d(x,X):=\min\{d(x,y)|\ y\in X\}$. The set $\widetilde{X_r}$ is  the $r$-neighborhood of $X$.

\begin{lem} 
\label{lem:main-ineq}
Let $A\in\mathscr{A}_X$ on a finite subset $X$ of $\Omega$, and let $r>0$. Then,  
\begin{multline}
\left\Vert[\tau_{t,\Lambda}(A),B]\right\Vert
\leq\left\Vert[\tau_{t,\Lambda}^{(<R)}(A),B]\right\Vert
+2\Vert B\Vert
\sum_{Z\cap\widetilde{X_{r}}\neq\emptyset}
\int_{0}^{t}\bigl\Vert\bigl[\tau_{t-s,\Lambda}^{(<R)}(A),h_Z^{(\geq R)}\bigr]\bigr\Vert ds \\
+2\Vert B\Vert\sum_{Z\cap\widetilde{X_{r}}=\emptyset}
\int_{0}^{t}\bigl\Vert\bigl[\tau_{t-s,\Lambda}^{(<R)}(A),h_Z^{(\geq R)}\bigr]\bigr\Vert ds
\label{eq:main-ineq}
\end{multline}
for any $B\in\mathscr{A}_\Lambda$ and any $t\ge 0$. 
\end{lem}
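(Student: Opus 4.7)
The plan is to reduce the estimate via identity (\ref{eq:com-identity1}) to controlling $\|[\mathscr{U}_\Lambda^R(t)^*\tau_{t,\Lambda}^{(<R)}(A)\mathscr{U}_\Lambda^R(t),B]\|$, and then to expand the unitarily-rotated operator as the ``free'' term $\tau_{t,\Lambda}^{(<R)}(A)$ plus a Duhamel integral remainder driven by $H_\Lambda^{(\ge R)}$. Once this identity is in hand, one commutes with $B$, uses $\|[C,B]\|\le 2\|B\|\|C\|$ inside the integral, and resolves $H_\Lambda^{(\ge R)}=\sum_Z h_Z^{(\ge R)}$ by the triangle inequality; unitary invariance of the operator norm converts everything into the commutators that appear on the right-hand side.

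Concretely, I would introduce
\[
L(s):=\mathscr{U}_\Lambda^R(s)^*\,\tau_{t,\Lambda}^{(<R)}(A)\,\mathscr{U}_\Lambda^R(s),\qquad s\in[0,t],
\]
so that $L(0)=\tau_{t,\Lambda}^{(<R)}(A)$ and, using $\mathscr{U}_\Lambda^R(t)=e^{itH_\Lambda^{(<R)}}e^{-itH_\Lambda}$, one checks directly that $L(t)=\tau_{t,\Lambda}(A)$. Differentiating with the help of (\ref{eq:evo-eq}) yields
\[
L'(s)=-i\,\mathscr{U}_\Lambda^R(s)^*\bigl[\tau_{t,\Lambda}^{(<R)}(A),H_\Lambda^{(\ge R)}(s)\bigr]\mathscr{U}_\Lambda^R(s),
\]
and integrating from $0$ to $t$ gives the Duhamel identity
\[
\tau_{t,\Lambda}(A)=\tau_{t,\Lambda}^{(<R)}(A)-i\int_0^t\mathscr{U}_\Lambda^R(s)^*\bigl[\tau_{t,\Lambda}^{(<R)}(A),H_\Lambda^{(\ge R)}(s)\bigr]\mathscr{U}_\Lambda^R(s)\,ds.
\]

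I would then take the commutator with $B$ on both sides, bound the norm of the integrand by $2\|B\|\,\|[\tau_{t,\Lambda}^{(<R)}(A),H_\Lambda^{(\ge R)}(s)]\|$ (the outer unitaries $\mathscr{U}_\Lambda^R(s)^*\,\cdot\,\mathscr{U}_\Lambda^R(s)$ are isometric), expand $H_\Lambda^{(\ge R)}(s)=\sum_Z \tau_{s,\Lambda}^{(<R)}(h_Z^{(\ge R)})$ and use the triangle inequality over $Z$. The automorphism property $\tau_{t,\Lambda}^{(<R)}=\tau_{s,\Lambda}^{(<R)}\circ\tau_{t-s,\Lambda}^{(<R)}$ combined with the isometry of $\tau_{s,\Lambda}^{(<R)}$ yields the key equality
\[
\bigl\|[\tau_{t,\Lambda}^{(<R)}(A),\tau_{s,\Lambda}^{(<R)}(h_Z^{(\ge R)})]\bigr\|=\bigl\|[\tau_{t-s,\Lambda}^{(<R)}(A),h_Z^{(\ge R)}]\bigr\|,
\]
which produces precisely the integrand in (\ref{eq:main-ineq}). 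The trivial decomposition $\sum_Z=\sum_{Z\cap\widetilde{X_r}\neq\emptyset}+\sum_{Z\cap\widetilde{X_r}=\emptyset}$ then gives the stated split.

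There is no substantive obstacle here: the argument is a single application of Duhamel plus elementary norm estimates. The only mild bookkeeping point is the interchange of $\sum_Z$ and $\int_0^t$, which is legitimate by Tonelli since the integrand is non-negative. The splitting into near and far $Z$ carries no content at this step and is introduced only because the two classes will be treated by different techniques in the subsequent proof of Theorem~\ref{thm:key}.
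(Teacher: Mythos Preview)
Your argument is correct and arrives at exactly the same integral bound as the paper, via a close cousin of its proof. The paper introduces $f(t,s)=[\tau_{t,\Lambda}^{(<R)}(A),\mathscr{U}_\Lambda^R(s)B\mathscr{U}_\Lambda^R(s)^*]$, differentiates it in $s$, invokes the Jacobi identity to split off a norm-preserving generator $-i[H_\Lambda^{(\ge R)}(s),f(t,s)]$, and then applies the Duhamel principle; you instead integrate $L(s)=\mathscr{U}_\Lambda^R(s)^*\tau_{t,\Lambda}^{(<R)}(A)\mathscr{U}_\Lambda^R(s)$ directly to an operator identity and only afterwards bound $\|[\,\cdot\,,B]\|\le 2\|B\|\|\cdot\|$. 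Your route is slightly more elementary (no Jacobi identity, no norm-preserving-evolution step), while the paper's route is the standard maneuver in Lieb-Robinson arguments and generalizes more readily when one wants to keep the commutator structure with $B$ in the integrand rather than crudely bounding it by $2\|B\|\|\cdot\|$. For the present lemma the two are equivalent, and the remaining steps (the automorphism identity giving $t-s$, the split over $Z$) are identical.
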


\begin{proof}
We introduce a $\mathscr{A}_{\Lambda}$-valued function $f(t,s)$ for $s,t>0$ by 
\[
f(t,s)=[\tau_{t,\Lambda}^{(<R)}(A),\mathscr{U}_\Lambda^R(s)B\mathscr{U}_\Lambda^R(s)^{*}]. 
\]
Clearly, from \eqref{eq:com-identity1}, one has 
$$
\left\Vert[\tau_{t,\Lambda}(A),B]\right\Vert
= \Vert f(t,t)\Vert .
$$
By using the Schr\"odinger equation \eqref{eq:evo-eq} and the Jacobi identity, we have 
\begin{align*}
\frac{d}{ds}f(t,s) &= -i [\tau_{t,\Lambda}^{(<R)}(A), [H_\Lambda^{(\geq R)}(s), 
\mathscr{U}_\Lambda^R(s)B\mathscr{U}_\Lambda^R(s)^*]]\\
&=-i[H_\Lambda^{(\geq R)}(s),f(t,s)]
+i[\mathscr{U}_\Lambda^R(s)B\mathscr{U}_\Lambda^R(s)^{*},
[\tau_{t,\Lambda}^{(<R)}(A),H_\Lambda^{(\geq R)}(s)]].
\end{align*}
Here, we have used the notation of (\ref{HL>Rt}). 
Since the first term in the right-hand side in the second equality is 
a generator of a unitary evolution, we can apply a variation of the Duhamel 
principle  (see, e.g., Appendix of \cite{3}).
Therefore, by integrating both sides with respect $s$ from $0$ to $t$, we obtain 
\begin{align*}
\Vert f(t,t)\Vert &\le \Vert f(t,0)\Vert  +2\Vert B\Vert 
\int_0^t ds\bigl\Vert [\tau_{t,\Lambda}^{(<R)}(A),H_\Lambda^{(\ge R)}(s)]\bigr\Vert \\ 
&=\bigl\Vert[\tau_{t,\Lambda}^{(<R)}(A),B]\bigr\Vert
+2\Vert B\Vert 
\int_0^t ds\bigl\Vert [\tau_{t-s,\Lambda}^{(<R)}(A),H_\Lambda^{(\ge R)}]\bigr\Vert. 
\end{align*}
Substituting the expression (\ref{HLambda>R}) of $H_\Lambda^{(\ge R)}$ 
into the integrand  of the last integral, 
we obtain the desired bound (\ref{eq:main-ineq}). 
\end{proof}

\section{Proof of Theorem~\ref{thm:key}}
\label{sec:ProofTheorem}

Concerning the first term in the right-hand side of the inequality (\ref{eq:main-ineq}), 
we have 
\begin{equation}
\left\Vert[ \tau_{t,\Lambda}^{(<R)}(A),B]\right\Vert \leq 2\Vert A\Vert \Vert B\Vert |X|\;
\exp[vt-d(X,Y)/R]
\end{equation}
for $A\in\mathscr{A}_X$ and $B\in\mathscr{A}_Y$, from Theorem~\ref{thm:LR-bound} 
in Appendix~\ref{LBfinite}. 

 For the second term, by using the inequality \eqref{eq:sr-condition}, we obtain 
\begin{align*}
\sum_{Z\cap\widetilde{X_{r}}\neq\emptyset}\int_{0}^{t}\left\Vert\left[\tau_{t-s,\Lambda}^{(<R)}(A),
h_{Z}^{(\geq R)}\right]\right\Vert ds 
&\leq 2t \Vert A\Vert\sum_{Z\cap\widetilde{X_{r}}\neq\emptyset} \Vert h_Z^{(\geq R)}\Vert  \\
&\leq 2t\Vert A\Vert \sum_{x\in\widetilde{X_r}} \sum_{Z\ni x} \Vert h_Z^{(\geq R)}\Vert \\
&\leq 2t\Vert A\Vert |\widetilde{X_r}| f(R) \\
&\leq 2t\Vert A\Vert |X|  g(r) f(R).
\end{align*}

The third term is estimated  by Lemma~\ref{lem:third-term} in Appendix~\ref{DerivIneq3rd} as  
\begin{equation}
\label{3rdterm}
\sum_{Z\cap\widetilde{X_{r}}=\emptyset}
\int_{0}^{t}\bigl\Vert\bigl[\tau_{t-s,\Lambda}^{(<R)}(A),h_Z^{(\geq R)}\bigr]\bigr\Vert ds
\leq \mathcal{C}_2 t\Vert A\Vert |X|^2 (r\vee R)^D R f(R) e^{vt-r/R},
\end{equation}
where $\mathcal{C}_2$ is some positive constant. 

We set $r=d(X,Y)$. Combining these with Lemma~\ref{lem:main-ineq}, we have 
\begin{align*}
\Vert[\tau_{t,\Lambda}(A),B]\Vert
&\leq 2\Vert A\Vert \Vert B\Vert |X| e^{vt-r/R}
+ 4 t\Vert A\Vert\; \Vert B\Vert |X| g(r)f(R) \\
&+2\mathcal{C}_2 t\Vert A\Vert\; \Vert B\Vert |X|^2 (r\vee R)^D Rf(R) e^{vt-r/R}. 
\end{align*}
\qed 
\appendix 

\section{The Lieb-Robinson bound for finite-range interactions}
\label{LBfinite}

In this appendix, we derive a Lieb-Robinson bound for the Hamiltonian $H_\Lambda^{(<R)}$ of (\ref{HLambda<R}) 
with finite-range interactions. 
The Lieb-Robinson bound for finite-range interactions is given by:  

\begin{thm}
\label{thm:LR-bound}
Let $A\in\mathscr{A}_{X}$
and $B\in\mathscr{A}_{Y}$ with $X,Y\subset\varLambda$, and let $R>0$. Then, we have  
\begin{equation}
\Vert [\tau_{t,\Lambda}^{(<R)}(A),B]\Vert 
\leq2\Vert A\Vert\Vert B\Vert|X|\exp[{vt-d(X,Y)/{R}}]
\label{eq:5}
\end{equation}
for any $t\geq 0$ with some positive constant $v$, under Assumption~\ref{ass:A}-\rm{(ii)}.
\end{thm}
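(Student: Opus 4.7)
My plan is to follow the classical Lieb--Robinson iteration, adapted to the finite-range Hamiltonian $H_\Lambda^{(<R)}$, extracting the spatial decay rate $1/R$ via a combinatorial pair-chain argument built on Assumption~\ref{ass:A}-(ii). First, setting $f(t) := [\tau_{t,\Lambda}^{(<R)}(A), B]$ and applying the Heisenberg equation, the Jacobi identity rewrites $[[H_\Lambda^{(<R)}, \tau_{t,\Lambda}^{(<R)}(A)], B]$ as $[H_\Lambda^{(<R)}, f(t)] + [[H_\Lambda^{(<R)}, B], \tau_{t,\Lambda}^{(<R)}(A)]$. The first piece drops out under Duhamel's principle (unitary flow), leaving
\[
\|f(t)\| \leq \|[A, B]\| + \sum_{Z \cap Y \neq \emptyset,\, \diam(Z) < R} \int_0^t \|[\tau_{s,\Lambda}^{(<R)}(A), [h_Z^{(<R)}, B]]\|\, ds.
\]

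Iterating this inequality $n$ times, and upper-bounding each resulting chain of commutators by a chain of sites (for each contributing chain $(Z_1, \ldots, Z_n)$, choose $x_0 \in X$, $x_n \in Y$, and intermediate $x_i$ with $\{x_{i-1}, x_i\} \subset Z_i$), one obtains
\[
\|f(t)\| \leq 2\|A\|\|B\| \sum_{n \geq 0} \frac{(2t)^n}{n!}\, K_n(X, Y),
\]
with the pair-chain bound
\[
K_n(X, Y) \leq \sum_{x_0 \in X,\, x_n \in Y,\, x_1, \ldots, x_{n-1} \in \Omega} \prod_{i=1}^n J_R(x_{i-1}, x_i), \qquad J_R(x, y) := \sum_{\substack{Z \ni x, y \\ \diam(Z) < R}} \|h_Z\|.
\]
Assumption~\ref{ass:A}-(ii) reads precisely $\sup_x \sum_y J_R(x, y) \leq \mathcal{C}_0$, so iterated summation gives $K_n(X, Y) \leq |X|\, \mathcal{C}_0^n$. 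Since $\diam(Z_i) < R$ forces $d(x_{i-1}, x_i) < R$ at each step, any nonzero chain satisfies $d(X, Y) \leq d(x_0, x_n) < nR$; hence $K_n(X, Y) = 0$ whenever $n \leq d(X, Y)/R$.

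Summing the tail using $(2t\mathcal{C}_0)^n/n! = e^{-n}(2et\mathcal{C}_0)^n/n! \leq e^{-d(X,Y)/R}(2et\mathcal{C}_0)^n/n!$ for $n > d(X, Y)/R$ then gives
\[
\|f(t)\| \leq 2\|A\|\|B\|\,|X|\, e^{-d(X, Y)/R}\, e^{2 e t\mathcal{C}_0} = 2\|A\|\|B\|\,|X|\, e^{vt - d(X, Y)/R}
\]
with $v = 2e\mathcal{C}_0$, independent of $R, \Lambda, X, Y, t$ as required. The main obstacle is executing the iteration cleanly: a naive Duhamel recursion produces tree-like rather than strictly sequential chains (since at each step $B$ gets replaced by a nested commutator with enlarged support), and one must work a bit to convert these tree chains into the clean $\mathcal{C}_0^n$ factor via Assumption~\ref{ass:A}-(ii), without incurring any $R$-dependent loss such as $g(R)$ or $|Z|$.
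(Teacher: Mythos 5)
Your overall skeleton coincides with the paper's proof in Appendix A: Duhamel plus the Jacobi identity, an iterated integral inequality, a chain estimate of the form $a_n\leq \mathcal{C}_0^n|X|$ from Assumption~\ref{ass:A}-(ii), the vanishing of chains with $nR\le d(X,Y)$ because $\diam(Z)<R$ forces each link to have length $<R$, and the tail summation via $1\le e^{n-d(X,Y)/R}$ giving $v=2e\mathcal{C}_0$. However, the step you flag at the end as ``the main obstacle'' is precisely the crux, and your setup does not resolve it. Because you commute $h_Z^{(<R)}$ into $B$ (summing over $Z\cap Y\neq\emptyset$ and iterating on the new observable $B'=[h_Z^{(<R)},B]$, supported on $Z\cup Y$), each iteration enlarges the support of the fixed observable, so the recursion genuinely produces tree chains in which $Z_{k+1}$ need only intersect $Y\cup Z_1\cup\cdots\cup Z_k$. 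Controlling such a sum through Assumption~\ref{ass:A}-(ii) requires anchoring $Z_{k+1}$ at some site of that union, which costs a factor $|Y\cup Z_1\cup\cdots\cup Z_k|$ at each step --- exactly the $|Z_i|$-dependent (and for long-range interactions uncontrolled, or at best $g(R)$-dependent) loss you yourself say must be avoided. Consequently your displayed bound on $K_n(X,Y)$ by strictly sequential pair chains does not follow from your recursion as written; it is asserted, not derived.

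The paper closes this gap with a different peeling, which is the one essential idea your proposal is missing. Writing the Heisenberg derivative as $i[\tau_{t,\Lambda}^{(<R)}(H'),\tau_{t,\Lambda}^{(<R)}(A)]$ with $H'=\sum_{Z\cap X\neq\emptyset}h_Z^{(<R)}$ and applying Jacobi, the non-generator term is $[\tau_{t,\Lambda}^{(<R)}(A),[\tau_{t,\Lambda}^{(<R)}(h_Z^{(<R)}),B]]$, which is then bounded by $2\Vert A\Vert\,\Vert[\tau_{s,\Lambda}^{(<R)}(h_Z^{(<R)}),B]\Vert$: the evolved observable $A$ is discarded and replaced by $h_Z^{(<R)}$, whose support is just $Z$, not $Z\cup X$. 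The recursion then runs on the quantity $C_B(Z,s)=\sup_{A'\in\mathscr{A}_Z}\Vert[\tau_{s,\Lambda}^{(<R)}(A'),B]\Vert/\Vert A'\Vert$ and yields strictly sequential chains $Z_1\cap X\neq\emptyset$, $Z_{i+1}\cap Z_i\neq\emptyset$, $Z_n\cap Y\neq\emptyset$, whence $a_n\leq\mathcal{C}_0^n|X|$ with no size-dependent factor. If you wish to keep your orientation (growing the chain from $Y$), you would have to evolve $B$ instead, using $\Vert[\tau_{t,\Lambda}(A),B]\Vert=\Vert[A,\tau_{-t,\Lambda}(B)]\Vert$, and apply the same replacement trick on that side. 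Everything else in your argument --- the pair-chain estimate given sequential chains, the vanishing for $n\le d(X,Y)/R$, and the tail summation with $v=2e\mathcal{C}_0$ --- matches the paper, but as it stands the proposal has a genuine gap at its central combinatorial step.
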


\begin{proof}
We essentially follow the proof of the Lieb-Robinson bound in \cite{4}, with explicit control of the 
constants. We set
\[
C_B(Z,t)=\sup_{A\in\mathscr{A}_{Z}}\frac{\Vert[\tau_{t,\Lambda}^{(<R)}(A),B]\Vert}{\Vert A\Vert},
\quad \text{for }Z\subset \Lambda.
\]
By computations similar to the proof of Lemma~\ref{lem:main-ineq}, we learn 
\begin{align*}
\frac{d}{dt}[\tau_{t,\Lambda}^{(<R)}(A),B]
&=i\sum_{Z\cap X\neq \emptyset}[\tau_{t,\Lambda}^{(<R)}(h_Z^{(<R)}),
[\tau_{t,\Lambda}^{(<R)}(A),B]]\\ 
&\quad -i\sum_{Z\cap X\neq \emptyset}[\tau_{t,\varLambda}^{(<R)}(A),
[\tau_{t,\Lambda}^{(<R)}(h_Z^{(<R)}),B]]. 
\end{align*}
Since the first term in the right-hand side is a generator of norm-preserving evolution, we have 
\begin{equation*}
\Vert[\tau_{t,\Lambda}^{(<R)}(A),B]\Vert\leq\Vert[A,B]\Vert+2\Vert A\Vert\sum_{Z\cap X\neq \emptyset}
\int_{0}^{t}\Vert[\tau_{s,\Lambda}^{(<R)}(h_Z^{(<R)}),B]\Vert ds.
\end{equation*}
in the same way as in Lemma~\ref{lem:main-ineq}.
Consequently, we obtain 
\[
C_{B}(X,t)\leq C_{B}(X,0)+2\sum_{Z\cap X\neq \emptyset}\bigl\Vert h_Z^{(<R)}\bigr\Vert\int_{0}^{t}C_{B}(Z,s)ds.
\]
 Iterations of this inequality yield 
\begin{equation}
\label{eq:CB-estimate}
C_{B}(X,t)\leq C_B(X,0)+2\Vert B\Vert\sum_{n=1}^{\infty}\frac{(2t)^{n}}{n!}a_{n}, 
\end{equation}
where 
\[
a_{n}=\sum_{Z_1\cap X\neq \emptyset}
\sum_{Z_{2}\cap Z_1\neq \emptyset}\cdots\sum_{\substack{Z_{n}\cap Z_{n-1}\neq\emptyset\\
Z_{n}\cap Y\neq\emptyset}
}\prod_{i=1}^{n}\Vert h_{Z_{i}}^{(<R)}\Vert.
\]
Using Assumption~(A)-(ii), we have  
\[
a_{1} \leq \sum_{x\in X}\sum_{y\in Y}\sum_{Z\ni x,y}\Vert h_Z^{(<R)}\Vert
\leq  \sum_{x\in X} \mathcal{C}_0 \leq \mathcal{C}_{0}|X|.
\]
Similarly, we have 
\begin{align*}
a_{2} & \leq  \sum_{x\in X}\sum_{y\in Y}\sum_{z\in\Lambda}\sum_{Z_{1}\ni x,z}\sum_{Z_{2}\ni z,y}
\bigl\Vert h_{Z_{1}}^{(<R)}\bigr\Vert\; \bigl\Vert h_{Z_{2}}^{(<R)}\bigr\Vert
\leq  \mathcal{C}_{0}^{2}|X|. 
\end{align*}
Repeating this procedure, we obtain 
\begin{equation}
a_{n}\leq\mathcal{C}_{0}^{n}|X|, \quad \text{for }n\geq 1. 
\label{eq:8}
\end{equation}

On the other hand, we have 
\begin{equation}
a_{n}=0\quad\text{if \ }nR<d(X,Y)
\label{eq:9}
\end{equation}
because $h_Z^{(<R)}=0$ for $\diam(Z)\geq R$. 
Combining \eqref{eq:CB-estimate}, (\ref{eq:8}) and (\ref{eq:9}), we have 
\begin{align*}
C_B(X,t) & \leq  2\Vert B\Vert|X|\sum_{n\geq d(X,Y)/{R}}\frac{(2\mathcal{C}_{0}t)^{n}}{n!}\\
&\leq 2\Vert B\Vert|X|\sum_{n\geq d(X,Y)/{R}}\frac{(2\mathcal{C}_{0}t)^{n}}{n!}e^{n-d(X,Y)/{R}}\\
&\leq 2\Vert B\Vert|X|\sum_n\frac{(2e\mathcal{C}_{0}t)^{n}}{n!}e^{-d(X,Y)/{R}}
= 2\Vert B\Vert|X|e^{vt-d(X,Y)/{R}} 
\end{align*}
for $d(X,Y)>0$. Here, the group velocity $v$ is given by $v=2e\mathcal{C}_{0}$.
This completes the proof.
\end{proof}

\section{Derivation of the inequality (\ref{3rdterm})}
\label{DerivIneq3rd}

The third term in the right-hand side of (\ref{eq:main-ineq}) in Lemma~\ref{lem:main-ineq} 
is estimated as  follows.

\begin{lem}
\label{lem:third-term}
Let $A\in\mathscr{A}_X$, $X\subset\Lambda$. Then 
\[
\sum_{Z\cap\widetilde{X_r}=\emptyset} \Vert [\tau_{t,\Lambda}^{(<R)}(A),h_Z^{(\geq R)}]\Vert 
\leq \mathcal{C}_2 \Vert A\Vert |X|^2 (r\vee R)^D Rf(R) e^{vt-r/R}
\]
for $t\geq 0$, $r>0$ and $R\geq 1$, where $r\vee R:=\max\{r,R\}$, and $\mathcal{C}_2$ is some positive constant. 
\end{lem}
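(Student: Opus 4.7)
The plan is to apply the short-range Lieb--Robinson bound (Theorem~\ref{thm:LR-bound}) pointwise in $Z$, and then to reduce the resulting sum over $Z$ to a tail sum over single sites, where the volume-growth assumption~\eqref{eq-ass-vol} can be invoked. The decay factor $e^{-d(X,Z)/R}$ furnished by Theorem~\ref{thm:LR-bound} is what renders the sum over $Z$ convergent despite the long range of $h_Z^{(\geq R)}$.

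First, applying Theorem~\ref{thm:LR-bound} with $Y=Z$ and $B=h_Z^{(\geq R)}\in\mathscr{A}_Z$ yields
\[
\bigl\Vert[\tau_{t,\Lambda}^{(<R)}(A),h_Z^{(\geq R)}]\bigr\Vert
\leq 2\Vert A\Vert\,\Vert h_Z^{(\geq R)}\Vert\,|X|\,\exp[vt-d(X,Z)/R].
\]
Since $d(X,Z)=\min_{y\in Z}d(X,y)$ and $s\mapsto e^{-s/R}$ is decreasing, $e^{-d(X,Z)/R}\leq\sum_{y\in Z}e^{-d(X,y)/R}$. Substituting this, exchanging the sums over $Z$ and $y\in Z$, and noting that $Z\cap\widetilde{X_r}=\emptyset$ forces $d(y,X)>r$ for every $y\in Z$, one obtains
\[
\sum_{Z\cap\widetilde{X_r}=\emptyset}\Vert h_Z^{(\geq R)}\Vert\,e^{-d(X,Z)/R}
\leq\sum_{y:\,d(y,X)>r}e^{-d(y,X)/R}\Bigl(\sum_{Z\ni y}\Vert h_Z^{(\geq R)}\Vert\Bigr),
\]
and the parenthesized inner sum is at most $f(R)$ by~\eqref{eq:sr-condition}. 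Applying the same trick once more, $e^{-d(y,X)/R}\leq\sum_{x\in X}e^{-d(x,y)/R}$ together with the implication $d(y,X)>r\Rightarrow d(x,y)>r$ for every $x\in X$, reduces the problem to estimating, for each single $x\in X$, the tail sum $S(x,r):=\sum_{y:\,d(x,y)>r}e^{-d(x,y)/R}$.

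The main obstacle is controlling this tail sum. Setting $N(s):=\#\{y\in\Omega:\,d(x,y)\leq s\}\leq C(1+s)^D$ by~\eqref{eq-ass-vol}, writing $e^{-d(x,y)/R}=R^{-1}\int_{d(x,y)}^\infty e^{-s/R}\,ds$, and applying Fubini gives
\[
S(x,r)\leq\frac{1}{R}\int_r^\infty N(s)\,e^{-s/R}\,ds\leq\frac{C}{R}\int_r^\infty(1+s)^D\,e^{-s/R}\,ds.
\]
The substitution $s=r+Ru$ turns the right-hand side into $Ce^{-r/R}\int_0^\infty(1+r+Ru)^De^{-u}\,du$. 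Using the convexity bound $(1+r+Ru)^D\leq C_D\bigl((1+r)^D+R^D u^D\bigr)$ and $\int_0^\infty u^De^{-u}\,du=\Gamma(D+1)$ then yields $S(x,r)\leq C_1\bigl((1+r)^D+R^D\bigr)e^{-r/R}\leq C_2(r\vee R)^D R\,e^{-r/R}$, where the harmless factor of $R$ is absorbed using $R\geq 1$. The delicate point here is to handle both regimes $r\leq R$ and $r>R$ in a single unified bound and to extract the correct powers of $r\vee R$ and $R$.

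Summing $S(x,r)$ over $x\in X$ produces an extra $|X|$, and combining with the prefactor $2\Vert A\Vert\,|X|\,e^{vt}$ from the first step together with the $f(R)$ obtained from~\eqref{eq:sr-condition} gives
\[
\sum_{Z\cap\widetilde{X_r}=\emptyset}\bigl\Vert[\tau_{t,\Lambda}^{(<R)}(A),h_Z^{(\geq R)}]\bigr\Vert
\leq 2C_2\Vert A\Vert\,|X|^2(r\vee R)^D R\,f(R)\,e^{vt-r/R},
\]
which is the required inequality with $\mathcal{C}_2=2C_2$.
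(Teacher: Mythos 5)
Your proposal is correct and follows essentially the same route as the paper: you apply Theorem~\ref{thm:LR-bound} termwise, reduce the sum over $Z$ to a single-site sum and extract $f(R)$ via \eqref{eq:sr-condition}, and then bound the resulting exponential tail sum by comparison with $\int_r^\infty(1+s)^D e^{-s/R}\,ds$ using \eqref{eq-ass-vol}, exactly as in the paper's estimate \eqref{sumexpbound}. The only difference is cosmetic: you organize the tail sum via the layer-cake identity $e^{-a/R}=R^{-1}\int_a^\infty e^{-s/R}\,ds$ and Fubini, whereas the paper decomposes into unit shells $r+k<d(X,Z)\leq r+k+1$ before comparing with the same integral.
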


\begin{proof}
{From} Theorem~\ref{thm:LR-bound}, we have 
\begin{align*}
&\sum_{Z\cap\widetilde{X_r}=\emptyset} \Vert [\tau_{t,\Lambda}^{(<R)}(A),h_Z^{(\geq R)}]\Vert 
\leq 2 \Vert A\Vert\; |X|\sum_{Z\cap\widetilde{X_r}=\emptyset} 
\Vert h_Z^{(\geq R)}\Vert e^{vt-d(X,Z)/R} \\
&\qquad = 2\Vert A\Vert\; |X| \sum_{k=0}^\infty \sum_{\substack{Z:\\ r+k<d(X,Z)\leq r+k+1}} 
\Vert h_Z^{(\geq R)}\Vert 
e^{vt-(r+k)/R}\\
&\qquad \leq 2\Vert A\Vert\; |X| \sum_{k=0}^\infty \sum_{\substack{z: \\d(X,z)\leq r+k+1}} 
\sum_{Z\ni z}\Vert h_Z^{(\geq R)}\Vert 
e^{vt-(r+k)/R}\\
&\qquad \leq 2\Vert A\Vert\; |X|  f(R) e^{vt}  \sum_{x\in X} \sum_{k=0}^\infty 
\sum_{\substack{z:\\ d(x,z)\leq r+k+1}} 
e^{-(r+k)/R}, \\
&\qquad \leq 2\Vert A\Vert\; |X|^2  f(R) e^{vt} 
\sup_x \sum_{k=0}^\infty \sum_{\substack{z:\\ d(x,z)\leq r+k+1}}e^{-(r+k)/R},
\end{align*}
where we have used the inequality (\ref{eq:sr-condition})  to show the third inequality. 
Elementary computations yield
\begin{align}
\sum_{k=0}^\infty \sum_{\substack{z:\\ d(x,z)\leq r+k+1}} e^{-(r+k)/R}
&\leq C_1 \int_r^\infty g(y+1) e^{-y/R}dy \nonumber \\
&\leq C_2  R^{D+1} \int_{r/R}^\infty y^D e^{-y}dy \nonumber \\
&\leq C_3 R^{D+1} (r/R+1)^D e^{-r/R} \nonumber \\
&\leq C_3 (r\vee R)^DR e^{-r/R}
\label{sumexpbound}
\end{align}
for each $x\in X$, where the constants $C_1,C_2,C_3$ depend only on the constants in \eqref{eq-ass-vol}. 
Combining these, we conclude the assertion. 
\end{proof}

Finally, we remark on the typical example in Section~2: The first inequality in (B.1) 
can be replaced by a sum in $z$ : $r+k<d(x,z)\leq r+k+1$ in general, 
and hence the factor $g(y+1)$ is replaced by $C y^{D-1}$ in the case of 
$\mathbb{Z}^D$ lattice. Therefore, the factor $(r\vee R)^D$ in 
the right-hand side of the fourth inequality can be replaced with $(r\vee R)^{D-1}$.


\end{document}